\definecolor{lavender}{rgb}{0.9, 0.9, 0.98}
\newtheorem{theorem}{Theorem}
\newtheorem{remark}{Remark}
\newtheorem{lemma}[theorem]{Lemma}
\newtheorem{mydef}{Definition}
\begin{document}

\title{Capacity Maximization for RIS-assisted Multi-user MISO Communication Systems}
\author{M. S. S. Manasa, Kali Krishna Kota, Praful D. Mankar, and Harpreet S. Dhillon
\thanks{M. S. S. Manasa, K.K. Kota, and P. D. Mankar are with Signal Processing and Communication Research Center, IIIT Hyderabad, India. Email: mss.manasa@research.iiit.ac.in, kali.kota@research.iiit.ac.in,   praful.mankar@iiit.ac.in. H. S. Dhillon is with Wireless@VT, Department of ECE, Virginia Tech, Blacksburg, VA (Email: hdhillon@vt.edu). 
}. 
\thanks{}}

\maketitle

\begin{abstract}
We consider a multi-user multiple input single output (MU-MISO) system assisted by a reconfigurable intelligent surface (RIS). For such a system, we aim to optimally select the RIS phase shifts and precoding vectors for maximizing the effective rank of the weighted channel covariance matrix which in turn improves the channel capacity. For a low-complex transmitter design, we employ maximum ratio transmission (MRT) and minimum-mean square error (MMSE) precoding schemes along with water-filling algorithm-based power allocation. Further, we show that MRT and MMSE exhibit equivalent performance and become optimal when the channel effective rank is maximized by optimally configuring the RIS consisting of a large number of elements.
\end{abstract}

\begin{IEEEkeywords}
MU-MISO, RIS, Capacity, MRT, MMSE.
\end{IEEEkeywords}
\vspace{-4mm}
\section{Introduction}
Reconfigurable Intelligent surface (RIS) has demonstrated potential to revolutionize the next-generation wireless communication systems due to its ability to alter the propagation environment. It is an array whose elements can be tuned to create a smart propagation environment to combat fading and improve the system performance in terms of capacity, coverage,  accuracy of sensing and localization, etc. 
For the details on the fundamentals of RIS along with the new challenges and research opportunities it brings, please refer to the survey and tutorial papers \cite{Liu_Yuanwei_2021_RIS_Survey, Emil_2022_signalprocessing, DiRenzo_2020_journal} and references therein.

For capacity maximization, the current literature predominantly focuses on optimizing performance metrics, such as sum rate, and system parameters, such as transmit power, etc. by jointly selecting the RIS phase shift matrix and precoding/beamforming vectors via application of diverse optimization methods including majorization-minimization (MM), semi-definite relaxation (SDR),  block coordinate descent (BCD), successive convex approximation (SCA), weighted mean square error (WMSE), etc.    
For example, \cite{Wang_2020_SNRmax} optimizes the RIS phase shift using SDR to maximize the signal-to-noise ratio (SNR) for the RIS-aided multiple input single output (MISO) system.
The authors of \cite{Guo_2020_WSRmax} maximize the weighted sum rate using stochastic SCA and non-convex BCD for RIS-aided multi-user (MU)-MISO system under perfect and imperfect channel state information.  
Further, \cite{Linsong_2021_minRatemax}  maximize the minimum rate for the RIS-aided MU-MISO system where the RIS phase shift and input covariance matrix are selected using alternating gradient descent method-based solutions. While there is a vast body of literature addressing similar problems, an alternative line of research on capacity improvement via maximizing the degrees of freedom (DoF), or equivalently rank maximization, of channel matrix (e.g. see \cite{Gesbert_2000_MIMOrankmax} for MIMO systems) remains unexplored for RIS-aided systems. 

It is noteworthy that the spatial multiplexing gain is higher for a full-rank channel when the received SNR is higher \cite{tse2005fundamentals}. However,  high SNR conditions are often associated with line-of-sight (LoS) channels, which tend to exhibit a lower rank, consequently, this limits the multiplexing gains. The authors of \cite{Emil_2020_rankmaxRIS} demonstrated how RIS can be utilized to maximize the channel rank for $2\times 2$ MIMO systems. 
Nonetheless, a full-rank channel matrix is not enough to ensure maximum capacity unless it is well-conditioned \cite{tse2005fundamentals}. This gives rise to the need for enhancing the condition number (CN) of the channel matrix which can be characterized by a newly defined metric called \textit{effective rank} \cite{O.Roy_ER_introduced}. Inspired by this, the authors of \cite{Mossallamy_2020_basepaper} and \cite{Meng_2023_ER_simulation} employed effective rank metric for improving the DoF of the composite channel matrix (including direct and RIS-assisted indirect links) by configuring RIS. In particular,  \cite{Mossallamy_2020_basepaper} focused on configuring RIS phase shift to maximize the effective rank for orthogonalizing the columns of the channel matrix to enhance the performance of the minimum mean square error (MMSE) receiver. On the other hand, \cite{Meng_2023_ER_simulation} validates the benefits of effective rank maximization of channel matrix for RIS-aided system $2\times 2$ MIMO experimental prototype and simulation results. {\em However, how RIS phase shift and precoding matrices can be configured for maximizing the effective rank and their impact on capacity performance for RIS-aided MIMO systems still remains unaddressed.}

Motivated by this, this letter aims to maximize the effective rank of the weighted channel covariance matrix for improving the channel capacity for a RIS-aided MU-MISO system. In particular, we propose to jointly optimize the RIS phase shift and input covariance matrix to maximize the effective rank. The proposed solution configures RIS to maximize the effective rank using the gradient-descent approach and selects the precoding vector using linear MRT and MMSE schemes. Furthermore, we demonstrate that both maximum ratio transmission (MRT) and MMSE precoders yield equivalent performance and become optimal as the number of RIS elements increases sufficiently. 
As a result, employment of MRT in RIS-aided MU-MISO system ensures both optimal performance as well as low design complexity as it circumvents the need for complex optimization to obtain precoding vector. 


\section{System Model}
\label{system_model}
\begin{figure}[!ht]
\centering
  \centering
  \includegraphics[width=0.45\textwidth]{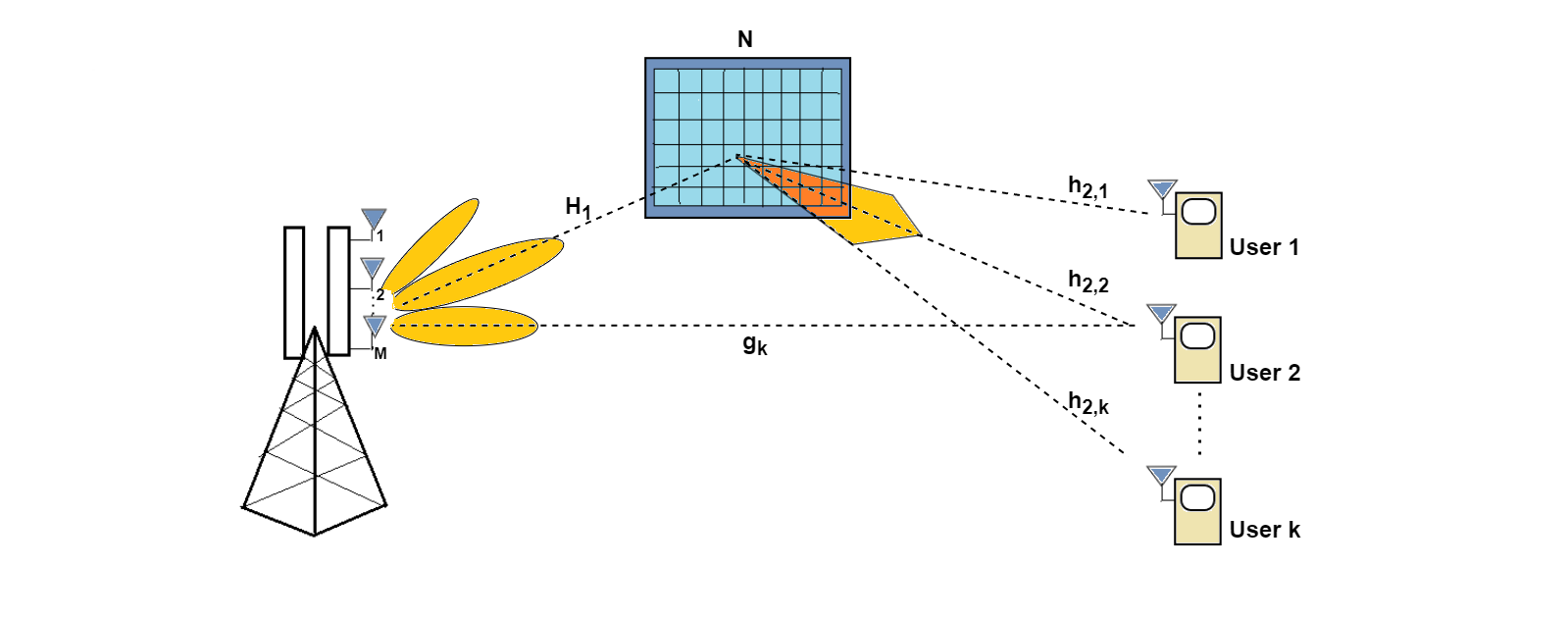}\vspace{-4mm}
\caption{Illustration of RIS-aided MU-MISO Downlink System.}\vspace{-2mm}
\label{Fig:SysModel}
\end{figure}

This paper considers a RIS-aided MU-MISO downlink communication system consisting of a base station (BS) equipped with $M$ antennas, $K$ single antenna users, and a RIS. The RIS is assumed to have $N$ passive elements whose phase shift matrix is  $\mathbf{\Phi} = {\rm diag} [e^{j\theta_1},  e^{j\theta_2},  \ldots e^{j\theta_N}]$ such that $\theta_j$ represents the phase shift provided by the $j$-th element. Let the baseband equivalent channels from the BS to the RIS be $\mathbf{H}_1 \in \mathbb{C}^{N \times M}$, and from the RIS to the $k$-th user be $\mathbf{h}_{2,k} \in \mathbb{C}^{1 \times N}$. The direct link between the BS to the $k$-th user is $\mathbf{g}_k \in \mathbb{C}^{1 \times M}$. The received signal at the $k$-th user can be written as
\begin{equation}
    y_k =   (\mathbf{g}_k + \mathbf{h}_{2,k}\mathbf{\Phi H}_1)\mathbf{x}_k + n_k, \nonumber
\end{equation}
where $\mathbf{x}_k = \sqrt{p}_k\mathbf{v}_ks_k$ such that $\mathbf{x}_k \in \mathbb{C}^{M \times 1}$ is the transmitted signal vector, $s_k$ is the message symbol, $\mathbf{v}_k$ is the unit norm precoding vector, and $p_k$ is the power associated with the $k^{th}$ user, and $\mathbf{n}_k \in \mathbb{C}^{K \times 1}$ represents the zero mean complex Guassian noise vector with variance $\sigma_n^2$. Without loss of generality, we assume $\mathrm{E}[ss^H] = 1$. The received signal in terms of the  composite channel matrix $\mathbf{H}$ can be written as
\begin{equation}
    \mathbf{y} = \mathbf{Hx} + \mathbf{n},
\end{equation}
where $\mathbf{x} = [x_1,\ldots,x_k]$, $\mathbf{H} = \mathbf{G} + \mathbf{H}_2\mathbf{\Phi H}_1$ is the composite MIMO channel matrix such that $\mathbf{G} = [\mathbf{g}_1,\ldots,\mathbf{g}_k]$, and $\mathbf{H}_2 = [\mathbf{h}_{2,1},\ldots,\mathbf{h}_{2,k}]$ and $\mathbf{n} = [n_1,\ldots,n_k]$. Let $\mathbf{h}_k$ be the $k$-th column of the matrix $\mathbf{H}$. The capacity  can be expressed as
\begin{equation}
    \mathrm{C} = \log_2|\mathbf{I} + \mathbf{HR}_x\mathbf{H}^H|,\label{MIMO_Capacity}
\end{equation}
where $\mathbf{R}_x = \mathbb{E}[\mathbf{xx}^H] = \sum_{k} p_k\mathbf{v}_k\mathbf{v}_k^H$ is the input covariance matrix. In this paper, our objective is to optimally select the precoding vectors, i.e. $\mathbf{v}_k$, and RIS phase shifts, i.e $\mathbf{\theta}$, such that the system capacity given in \eqref{MIMO_Capacity} is maximized. Consequently, we formulate the capacity maximization problem as
\begin{subequations}
\begin{align}
\max_{\theta, \mathbf{R}_x}~~& \log_2|\mathbf{I} + \mathbf{HR}_x\mathbf{H}^H|,\\
~~& \textit{\emph{s.t}} \hspace{0.3cm} \mathrm{Tr}(\mathbf{R}_x) \leq P_t,\label{power_constraint}\\
\textit{\emph{and}} ~~&  0 \leq \mathbf{\theta}_k < 2\pi \hspace{0.2cm}\forall k = 1,\ldots,N,
\end{align}
\end{subequations}
\label{Capacity_objective}where \eqref{power_constraint} represents the transmission power constraint given that the total available power is $P_t$. In the following section, we present an equivalent problem to optimally select the RIS phase shifts and employ linear precoding and power allocation techniques for configuring $\mathbf{R}_x$ to maximize the capacity.


\section{Optimal RIS Configuration and Precoding}
\label{ER-C}

In general, it is known that the MIMO channel capacity is proportional to the spatial DoF, i.e. the number of independent spatial dimensions available for communication. The rank of the matrix $\mathbf{HR}_x\mathbf{H}^H$, or equivalently the rank of the composite channel matrix $\mathbf{H}$, represents the number of spatially separable channels that can be utilized to form parallel streams for simultaneous transmission. Thus, a lower-rank channel matrix limits the ability of the system to take advantage of spatial diversity, which results in a lower capacity. However, a higher-rank channel matrix is not always sufficient to achieve capacity. The channel matrix condition number provides a more accurate characterization of channel capacity. A well-conditioned matrix (i.e. whose all eigenvalues are equal) provides better DoF which in turn results in higher capacity. Thus, the matrix $\mathbf{HR}_x\mathbf{H}^H$ must be well conditioned to attain higher capacity. From the spectral theorem, it can be verified that the eigenvectors of a well-conditioned matrix must be orthogonal to each other. From the above discussion, it is clear that orthogonalizing the columns of $\mathbf{HR}_x\mathbf{H}^H$ reflects as equalizing the eigenvalues, thereby increasing the MIMO capacity.
In addition, the  above observation can also be interpreted as the fact that the capacity-achieving matrix must have a higher value of
\begin{equation}
    \left|\mathbf{I}+\mathbf{HR}_x\mathbf{H}^H\right| =  \prod \left(1+\lambda_i\right),\label{det_prod}
\end{equation}
where $\lambda_i$'s are the eigenvalues of $\mathbf{HR}_x\mathbf{H}^H$. From \eqref{det_prod}, it is quite apparent that  $\lambda_i$'s should have equal values to maximize the product and thereby achieve higher capacity. Thus, the MIMO channel capacity maximization problem can be conveniently reframed as a problem to orthogonalize the columns of the composite channel matrix. This is where the RIS phase shifts play a vital role. By appropriately selecting RIS phase shifts, it is possible to alter the multipath fading to some extent such that it transforms the composite channel matrix into an orthogonal matrix. Moreover, it is essential to consider precoding alongside power constraints to further leverage the advantages offered by the multi-antenna transmitter. The problem of finding the optimal phase shifts of RIS $\mathbf{\Phi}$ and the precoding matrix $\mathbf{R}_x$ that orthogonalize the composite channel can be written in the form of a feasibility problem as
\begin{subequations}
\begin{align}
    \text{find} ~~& \mathbf{\Phi , R}_x\\
    \text{s.t} \hspace{0.3cm} (\mathbf{HR}_x\mathbf{H}^H)_i~~&(\mathbf{HR}_x\mathbf{H}^H)_j^H   = 0\hspace{0.3cm}\forall i\neq j\\
     \textit{\emph{and}} \hspace{0.3cm} \mathrm{Tr}&(\mathbf{R}_x) \leq P_t.
\end{align}
\label{Reframed objective}
\end{subequations}
As discussed above, orthogonalizing the columns of a matrix is closely tied to maximizing its rank. However, attaining rank maximization can prove to be an impractical solution, as a full-rank matrix might still exhibit ill-conditioning if there exists a substantial variation between eigenvalues. As a solution, we suggest maximizing a real-valued extension of rank known as the effective rank, which is introduced in \cite{O.Roy_ER_introduced}. The effective rank measures the effective dimensionality of the matrix and provides a cleaner way  to characterize how well-conditioned the matrix is.  
\begin{mydef}
The effective rank of a matrix $\mathbf{A} \in \mathbb{C}^{M\times K}$ is
\begin{equation}
    \mathcal{E}(\mathbf{A}) = \exp\left(-\sum_{i=1}^{\min(M,K)}\frac{\lambda_{i}}{\Vert \lambda\Vert_{1}}\log\frac{\lambda_{i}}{\Vert \lambda\Vert_{1}}\right),\label{er_def}
\end{equation}
\label{ER}
where $\lambda_i$'s are the eigenvalues of $\mathbf{A}$ and $\Vert\cdot\Vert_{1}$ is the $l_1$ norm.
\end{mydef}
The effective rank $\mathcal{E}(\mathbf{A}) \leq \mathrm{Rank}(\mathbf{A})$. A greater effective rank indicates a more uniform distribution of singular values, resulting in increased orthogonality among the  columns of matrix. Inspired by this, we reformulate  \eqref{Reframed objective} as
\begin{subequations}
\begin{align}
~~& \max_{\mathbf{\theta}, \mathbf{R}_x}\hspace{0.2cm} \mathcal{E}(\mathbf{HR}_x\mathbf{H}^H),\\
~~&  \textit{\emph{s.t}} \hspace{0.3cm} \mathrm{Tr}(\mathbf{R}_x) \leq P_t,\\
\textit{\emph{and}} ~~&  0 \leq \mathbf{\theta}_k < 2\pi \hspace{0.2cm}\forall k = 1,\ldots,N.
\end{align}
\label{ER-C_objective}
\end{subequations}
In the forthcoming sections, we focus our attention on solving the optimization problem in \eqref{ER-C_objective}.
We present methods to find the optimal solutions for phase shifts of RIS $\mathbf{\theta}$ and precoding matrix $\mathbf{R}_x$ that maximize the effective rank and consequently enhance the capacity.
\subsection{Design of phase shifts}
For a given input covariance matrix $\mathbf{R}_x$, the optimization problem in \eqref{ER-C_objective} becomes
\begin{subequations}
\begin{align}
    ~~&\max_{\mathbf{\theta}}\hspace{0.2cm} \mathcal{E}(\mathbf{HR}_x\mathbf{H}^H)\\
    \textit{\emph{s.t}} ~~&  0 \leq \mathbf{\theta}_k < 2\pi \hspace{0.2cm}\forall k = 1,\ldots,N,%
\end{align}%
\label{sub_prob1}%
\end{subequations}%
where $\mathbf{\theta}=\left[\theta_1, \theta_2, \ldots, \theta_N\right]$. This problem can be efficiently solved by employing a gradient-based technique to find the optimal solution. Thus, we first derive the gradient of $\mathcal{E}(\mathbf{HR}_x\mathbf{H}^H)$ with respect to the RIS phase shifts $\theta$ in a similar manner as done in \cite{Mossallamy_2020_basepaper}  as follows
\begin{equation}
    \frac{\partial \mathcal{E}}{\partial \theta_n}=\left[\begin{array}{llll}
\frac{\partial \lambda_{1}}{\partial \theta_n} & \frac{\partial \lambda_{2}}{\partial \theta_{n}} & \ldots & \frac{\partial \lambda_{K}}{\partial \theta_{n}}
\end{array}\right]\left[\begin{array}{c}
\frac{\partial \mathcal{E}}{\partial \lambda_{1}} \\
\frac{\partial \mathcal{E}}{\partial \lambda_{2}} \\
\vdots \\
\frac{\partial \mathcal{E}}{\partial \lambda_{K}}
\end{array}\right].
\label{grad}
\end{equation}
The partial derivative of $\mathcal{E}$ w.r.t the $k$-th eigenvalue can be found by directly differentiating \eqref{er_def} as
\begin{equation}
    \frac{\partial \mathcal{E}}{\partial \lambda_{k}}=\sum_{j=1}^{K} \frac{-C_{j, k}}{\left\|\mathbf{HR}_x\mathbf{H}^H\right\|_{1}^{2}}\left(1+\ln \frac{\lambda_{j}}{\left\|\mathbf{HR}_x\mathbf{H}^H\right\|_{1}}\right) \mathcal{E}\left(\mathbf{HR}_x\mathbf{H}^H\right),
\label{pd_2}
\end{equation}
\begin{equation}
    \text{where}~~C_{j, k}= \begin{cases}\sum_{i \neq k} \lambda_{i} & \text { if } j=k, \\ -\lambda_{j} & \text { if } j \neq k .\end{cases}\nonumber
\end{equation}
The singular value decomposition (SVD) of $\mathbf{HR}_x\mathbf{H}^H=\mathbf{U}^{*} \mathbf{S V}$. Then,
the partial derivative of $k$-th eigenvalue with respect to the $n$-th phase shift is given as
\begin{equation}
    \frac{\partial \lambda_{k}}{\partial \theta_{n}}=\mathbf{u}_{k}^{*} \frac{\partial (\mathbf{HR}_x\mathbf{H}^H)}{\partial \theta_{n}} \mathbf{v}_{k},
\label{pd_1}
\end{equation}
where $\mathbf{u}_{k}$ and $\mathbf{v}_{k}$ are the $k$-th columns of $\mathbf{U}$ and $\mathbf{V}$, respectively. Next, using  product rule of differentiation, we can write 
\begin{equation}
    \frac{\partial (\mathbf{HR}_x\mathbf{H}^H)}{\partial \theta_{n}} = \frac{\partial \mathbf{H}}{\partial \theta_{n}}\mathbf{R}_x\mathbf{H}^H + \mathbf{HR}_x\frac{\partial \mathbf{H}^H}{\partial \theta_{n}},
\end{equation}
where
\begin{subequations}
\begin{align}
   \left[\frac{\partial \mathbf{H}}{\partial \theta_{n}}\right]_{k, m}=~~&\mathrm{e}^{\mathrm{i}\left(\theta_{n}+\frac{\pi}{2}\right)}[\mathbf{H}_2]_{k,n}[\mathbf{H}_1]_{n,m},\nonumber\\
    \textit{\emph{and}} \left[\frac{\partial \mathbf{H}^H}{\partial \theta_{n}}\right]_{m, k}=~~&\mathrm{e}^{\mathrm{i}\left(\theta_{n}+\frac{\pi}{2}\right)}[\mathbf{H}_1]_{m,n}[\mathbf{H}_2]_{n,k}.\nonumber 
\end{align}
\end{subequations}
Now, by substituting \eqref{pd_1} and \eqref{pd_2}  into \eqref{grad}, we can determine the gradient of the effective rank $\mathcal{E}$. Thus, by using this gradient, the phase shift can be updated as 
\begin{equation}
    \mathbf{\theta}_n^* = \mathbf{\theta}_n + \alpha\times \frac{\partial \mathcal{E}}{\partial \theta_n},\label{gd}
\end{equation}
where $\alpha$ is the learning rate, a small positive constant that determines the step size in the direction of the steepest increase of the gradient function.
\subsection{Design of input covariance matrix}
For a given RIS phase shift $\mathbf{\theta}$, the problem \eqref{ER-C_objective} becomes
\begin{subequations}
\begin{align}
~~& \max_{\mathbf{R}_x}\hspace{0.2cm} \mathcal{E}(\mathbf{HR}_x\mathbf{H}^H),\\
~~&  \textit{\emph{s.t}} \hspace{0.3cm} \mathrm{Tr}(\mathbf{R}_x) \leq P_t.
\end{align}
\label{sub_prob2}
\end{subequations}
Recall that 
\begin{equation}
    \mathbf{R}_x = \sum_{k} p_k\mathbf{v}_k\mathbf{v}_k^H.\label{precoding} 
\end{equation}
The matrix $\mathbf{R}_x$ is pivotal in shaping the transmitted signal based on the CSI. The basic idea is to distribute the available power $P_t$ among different users using water-filling (WF) along with appropriate precoding vectors $\mathbf{v}_k$ that can exploit the spatial DoF offered by the channel in a way that maximizes the effective rank. While there exist various approaches to determine $\mathbf{v}_k$, our primary emphasis lies in attaining the stationary solution through the utilization of linear precoding schemes MRT and MMSE. While these schemes are suboptimal in general, our original motivation behind choosing them was to avoid using complex optimization algorithms to find the optimal $\mathbf{v}_k$, thereby ensuring low complexity transmit design. However, shortly, we will also prove that the MRT precoder will essentially be the optimal choice for the precoding when the number of RIS elements are sufficiently large. In the following, we first present the MRT and MMSE precoding scheme and later analyze their interplay in view of the proposed channel effective rank-maximization.

\subsubsection{MRT}
This precoding selects
\begin{equation}
    \mathbf{v}_k = \frac{\mathbf{h}_k^H}{\Vert\mathbf{h}_k\Vert}.\label{MRT}
\end{equation}
Thus, the received signal at the $k$-th user becomes
\begin{equation}
    y_{k}=\sqrt{p_{k}}\left\|\mathbf{h}_{k}\right\| s_{k}+ \sum_{j \neq k}\sqrt{p_{j}} \mathbf{h}_{k} \frac{\mathbf{h}_{j}^H}{\Vert\mathbf{h}_{j}\Vert} s_{j} + n_{k}. \label{MRTi}
\end{equation}
It is worth recalling that we have orthogonalized the columns of $\mathbf{H R}_{x} \mathbf{H}^H$ by optimally selecting $\mathbf{\Phi}$ (please refer to the discussion presented at the beginning of the Section \ref{ER-C}). Consequently, in the ideal case, we can assert that $\mathbf{h}_{k}\mathbf{h}_{j}^H=0$. Thus, it is evident that the interference in \eqref{MRTi} gets completely removed  (or, minimized if the maximum effective rank is not achieved). In such ideal scenario, the received signal and the corresponding SNR  can be expressed as
\begin{align}
    y_{k} ~~& = \sqrt{p_{k}}\left\|\mathbf{h}_{k}\right\|s_{k}+n_{k},\\ 
    \text{and} \hspace{0.2cm} \Gamma_{\text {MRT},k} ~~& = \gamma_k\left\|\mathbf{h}_{k}\right\|^2,\label{SNR_MRT}
\end{align}
respectively, where $\gamma_k = \frac{p_k}{\sigma_n^2}$.

\subsubsection{MMSE}
This precoding selects
\begin{equation}
  \mathbf{v}_k = \frac{\left(\mathbf{H}^{H}\mathbf{H}+\frac{1}{\gamma}_k \mathbf{I}\right)^{-1} \mathbf{h}_k^{H}}{\Vert\left(\mathbf{H}^{H}\mathbf{H}+\frac{1}{\gamma}_k \mathbf{I}\right)^{-1} \mathbf{h}_k^{H}\Vert}.\label{MMSE} 
\end{equation}
Thus, the received signal at the $k$-th user becomes
    \begin{align}
    y_k = ~~&\sqrt{p}_k\mathbf{h}_k\frac{\left(\mathbf{H}^{H}\mathbf{H}+\frac{1}{\gamma_k} \mathbf{I}\right)^{-1}\mathbf{h}_k^H}{\Vert\left(\mathbf{H}^{H}\mathbf{H}+\frac{1}{\gamma_k} \mathbf{I}\right)^{-1}\mathbf{h}_k^H\Vert}s_k + \nonumber\\
    ~~&\sum_{j \neq k}\sqrt{p_{j}} \mathbf{h}_k\frac{\left(\mathbf{H}^{H}\mathbf{H}+\frac{1}{\gamma_j} \mathbf{I}\right)^{-1}\mathbf{h}_j^H}{\Vert\left(\mathbf{H}^{H}\mathbf{H}+\frac{1}{\gamma_j} \mathbf{I}\right)^{-1}\mathbf{h}_j^H\Vert} s_{j} + n_{k}. \label{MMSEi}
    \end{align}
It can be noted that the maximum value of the effective rank of the channel matrix $\mathbf{H}$ is equal to $\min(M, K)$ and is achieved when all its eigenvalues $\lambda$ are equal. This can be easily verified using \eqref{er_def}. Such an ideal case corresponds to the scenario wherein the composite channel matrix $\mathbf{H}$ is orthogonal so that its columns are equally strong, i.e. $\Vert\mathbf{h}_k\Vert^2 = \Vert\mathbf{h}_k\Vert^2 = \lambda$. From this, we obtain $\mathbf{H}^{H}\mathbf{H} = \lambda\mathbf{I}$ using which we rewrite \eqref{MMSEi} as
\begin{equation}
    y_{k} = \sqrt{p_{k}}\left\|\mathbf{h}_{k}\right\| s_{k}+ n_{k},
\label{MMSEr}
\end{equation}
and the corresponding SNR can be written as
\begin{equation}
     \Gamma_{\text {MMSE},k} = \gamma_k\left\|\mathbf{h}_{k}\right\|^2.\label{SNR_MMSE}
\end{equation}
From  \eqref{SNR_MRT} and \eqref{SNR_MMSE}, it becomes clear that $\Gamma_{\text {MMSE},k} = \Gamma_{\text {MRT},k}$.
This establishes the equivalence of MRT and MMSE precoding schemes when the effective rank attains its maximum value, as highlighted in the following lemma.

\begin{lemma}
\label{Lemma}
  MRT  scheme provides capacity equal to that of the MMSE  scheme when the effective rank of the weighted channel covariance matrix $\mathbf{HR}_x\mathbf{H}^H$ is maximized by optimally configuring RIS with infinitely large number of elements.
\end{lemma}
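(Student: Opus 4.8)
The plan is to show that in the regime described by the lemma the MMSE precoding vector collapses onto the MRT precoding vector, so the two schemes induce the same per-user signal model and hence the same rate under any power allocation, in particular under water-filling.

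First I would make the limiting regime precise. By the discussion around \eqref{er_def}, $\mathcal{E}(\mathbf{HR}_x\mathbf{H}^H)$ attains its maximum value $\min(M,K)$ exactly when all eigenvalues of $\mathbf{HR}_x\mathbf{H}^H$ coincide, which (taking a full-rank $\mathbf{R}_x$ with equal powers, or arguing eigenvalue by eigenvalue) forces the columns of $\mathbf{H}=\mathbf{G}+\mathbf{H}_2\mathbf{\Phi}\mathbf{H}_1$ to be mutually orthogonal and of equal norm, i.e. $\mathbf{H}^H\mathbf{H}=\lambda\mathbf{I}$ for some $\lambda>0$. I would then argue that this configuration is attainable as $N\to\infty$: the off-diagonal conditions $\mathbf{h}_i^H\mathbf{h}_j=0$ for $i\neq j$ together with the equal-norm conditions form a fixed, $O(K^2)$-sized system of smooth equations in the phases $\theta_1,\dots,\theta_N$, while the number of free phases grows without bound, so for $N$ large enough the feasibility problem \eqref{Reframed objective} admits a solution (equivalently, the gradient-ascent iteration \eqref{gd} drives $\mathcal{E}$ to $\min(M,K)$).

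Second, with $\mathbf{H}^H\mathbf{H}=\lambda\mathbf{I}$ I would substitute directly into \eqref{MMSE}: the inverse $\bigl(\mathbf{H}^H\mathbf{H}+\tfrac{1}{\gamma_k}\mathbf{I}\bigr)^{-1}=\bigl(\lambda+\tfrac{1}{\gamma_k}\bigr)^{-1}\mathbf{I}$ is a positive scalar times the identity, so after normalization $\mathbf{v}_k=\mathbf{h}_k^H/\|\mathbf{h}_k\|$, which is precisely the MRT precoder \eqref{MRT}. Hence the received signals \eqref{MRTi} and \eqref{MMSEi} both reduce to the interference-free form \eqref{MMSEr}, and the per-user SNRs in \eqref{SNR_MRT} and \eqref{SNR_MMSE} are identical, equal to $\gamma_k\|\mathbf{h}_k\|^2$.

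Third, I would lift this to the capacity level: in this configuration $\mathbf{HR}_x\mathbf{H}^H$ is diagonalized with eigenvalues $p_k\|\mathbf{h}_k\|^2$ under either precoder, so by \eqref{det_prod} the rate is $\sum_k\log_2\bigl(1+p_k\|\mathbf{h}_k\|^2/\sigma_n^2\bigr)$ for both schemes; applying the same water-filling allocation to the common set of sub-channel gains $\|\mathbf{h}_k\|^2/\sigma_n^2$ then gives equal capacity. The main obstacle I anticipate is the first step --- rigorously establishing that the orthogonal, equal-norm configuration is actually reached as $N\to\infty$ rather than merely approached. A clean workaround is to fix a sequence of RIS configurations along which $\mathcal{E}(\mathbf{HR}_x\mathbf{H}^H)\to\min(M,K)$, note that both the capacity \eqref{MIMO_Capacity} and the MRT/MMSE precoders are continuous functions of $\mathbf{H}$, and pass to the limit; the equality of the two capacities then holds exactly in the limit even though it is only asymptotic for finite $N$.
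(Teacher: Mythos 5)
Your proof is correct and follows essentially the same route as the paper's: maximal effective rank forces equal-norm, mutually orthogonal columns (i.e.\ $\mathbf{H}^H\mathbf{H}=\lambda\mathbf{I}$), under which both precoders yield the interference-free SNR $\gamma_k\|\mathbf{h}_k\|^2$ and hence equal capacity. Your explicit observation that $\bigl(\mathbf{H}^H\mathbf{H}+\tfrac{1}{\gamma_k}\mathbf{I}\bigr)^{-1}$ becomes a scalar multiple of $\mathbf{I}$ so that the MMSE precoder collapses to MRT, along with the attainability-as-$N\to\infty$ and continuity/limit arguments, simply makes rigorous what the paper asserts informally.
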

\begin{proof}
Note that the effective rank can attain a higher value with the increase in the number of RIS elements $N$. As a result, we can say that the effective rank maximized with respect to RIS phase shift will provide a perfectly well-conditioned matrix for an RIS with an infinitely large number of elements. Thus, based on the above discussion for the ideal scenario, we can say that the columns of the optimally configured composite channel matrix are equally strong i.e. $\Vert\mathbf{h}_k\Vert^2 = \Vert\mathbf{h}_k\Vert^2 = \lambda$ and orthogonal to each other i.e. $\mathbf{h}_k\mathbf{h}_j^H = 0$. Therefore, in this ideal case, we can deduce using \eqref{SNR_MRT} and \eqref{SNR_MMSE} that the capacity difference of MRT and MMSE goes to zero.
\end{proof}

\begin{remark}
It is worth noting the impact of the number of RIS elements $N$ on the MRT and MMSE precoding performances. From Lemma \ref{Lemma}, it is reasonable to say that for moderately higher N, the MRT scheme provides a capacity that is almost equal to that of MMSE. Besides, with the further increase in $N$, the received SNR becomes significantly larger at which the performance of both MMSE and MRT precoding schemes converge with each other. Nonetheless, we can say that employing the proposed effective rank maximization results in the same performance of MRT and MMSE. This observation serves as a clear indication that employing MRT along with the optimally configured RIS can reduce the design complexity at the transmitter (as MMSE necessitates the computation of matrix inverses).\label{Remark}  
\end{remark}

The optimization problem in \eqref{ER-C_objective} is tackled by solving the two sub-problems \eqref{sub_prob1} and \eqref{sub_prob2} using the proposed solution iteratively, and the details are summarized in Algorithm \ref{Alg1}.

\begin{algorithm}\label{Alg1}

\KwInput{$\mathbf{H}_1$, $\mathbf{H}_2$ }
\KwOutput{$\mathbf{\Phi}^*$, $\mathbf{R}_{x}^*$}
\KwInit{$\mathbf{\Phi}_0$, $\mathbf{v}_{k_0}$, $p_{k_0}$, $i = 1$}

Calculate $p_{k_i}$ using WF.\\
Calculate $\mathbf{v}_{k_i}$ using MRT/MMSE using \eqref{MRT}/\eqref{MMSE}.\\
Calculate $\mathbf{{R}_x}_i$ using \eqref{precoding}.\\
Calculate the matrix $\mathbf{HR}_x\mathbf{H}^H$ for this configuration of $\mathbf{\Phi}_{i-1}$ and $\mathbf{{R}_x}_i$.\\
Solve the problem in \eqref{ER-C_objective} using the gradient descent method as per \eqref{gd}.
$\mathbf{i} \gets \mathbf{i}+1$\\
Continue until $|\mathcal{E}(\mathbf{\theta}_{i},\mathbf{R}_{x_i})-\mathcal{E}(\mathbf{\theta}_{i-1},\mathbf{R}_{x_{i-1}})| \leq \gamma$

\caption{Optimal configuration of RIS and input covariance matrix}
\end{algorithm}
\vspace{-4mm}
\section{Numerical analysis}
This section presents the numerical analysis to validate the capacity performance of the proposed scheme based on maximizing the effective rank of the channel matrix. Unless otherwise specified, the parameters for numerical analysis are considered to be: the number of transmit antennas $M=4$, the number of users $K=3$, the number of RIS elements $N=8$, and the noise variance $\sigma_n^2$ =1 and the transmit power $P_t= 10$ W. It is to be noted that direct link is not considered in our simulation in order to clearly understand the impact of RIS. The spectral efficiency (SE) is computed by averaging over $10^3$ independent channel realizations.
\begin{figure}[!ht]
    \centering\vspace{-2mm}
    \includegraphics[width=.45\textwidth]{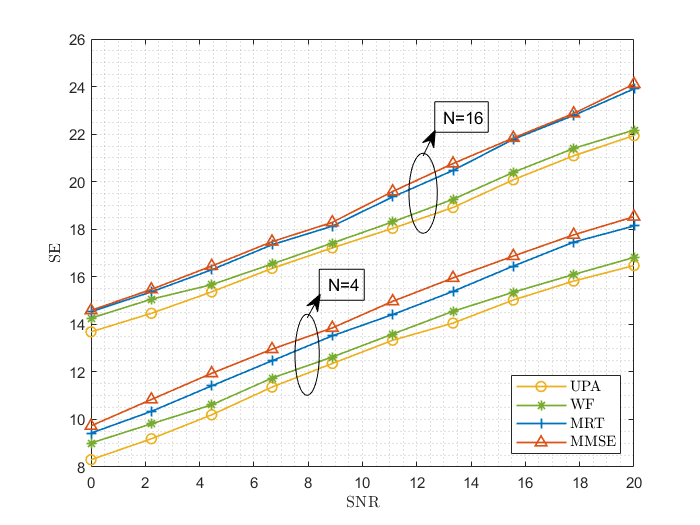}
    \vspace{-4mm}
    \caption{SE v/s SNR.}
    \vspace{-1mm}
\label{NR1}
\end{figure}

Figure \ref{NR1} shows the relationship between SE and SNR for the proposed algorithm. These results are provided for the cases wherein $\mathbf{\theta}$ is selected to maximize the effective rank (ER) and $\mathbf{R}_x$ is selected according to: Uniform power allocation (UPA), WF, MRT-WF, and MMSE-WF precoding schemes. It is to be noted that the effective-rank criteria in conjunction with precoding leads to a drastic improvement in the achievable SE. This is attributed to the fact that precoding optimizes the transmission signal according to the channel conditions, facilitating the transmission of interference-minimized parallel data streams. Consequently, precoding offers an additional advantage compared to UPA and WF schemes. Further, it can be seen that $N$ has a significant impact on SE which is expected. This result also shows that the performance gap of MRT and MMSE schemes reduces when $N$ increases from 4 to 16. This observation is also consistent with the result presented in Lemma \ref{Lemma}.
 
\begin{figure}[!ht]
    \centering
    \includegraphics[width=.45\textwidth]{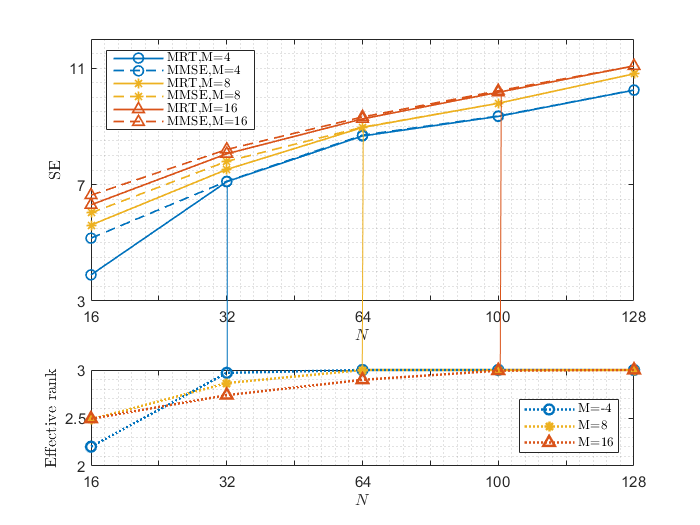}\vspace{-4mm}
    \caption{Top: SE vs. $N$. Bottom: Effective Rank vs. $N$. }\vspace{-1mm}
\label{NR2}
\end{figure}

Figure \ref{NR2} (top) shows that SE increases with the increase in either of the number of RIS elements $N$ and transmit antennas  $M$ or both for obvious reasons. The increase in SE  with respect to $N$ is gradual when $M$ is large.  However, the disparity in SE for different values of $M$ diminishes as $N$ becomes large. Further, it can be seen that the SE achievable by MRT converges to that of MMSE with the increasing $N$. This is because the effective rank (or equivalently condition number) of the composite channel matrix achievable by the proposed solution improves with the increase in $N$, which can be verified from Figure \ref{NR2} (bottom). Furthermore, note that this performance convergence occurs for the smaller values of $N$ when $M$ is small. This is attributed to the fact that, for small $M$, the dimension of the composite channel matrix is low for which the lesser number of RIS elements $N$ are sufficient for channel orthogonalization. 

\begin{figure}[!ht]
    \centering
    \includegraphics[width=.45\textwidth]{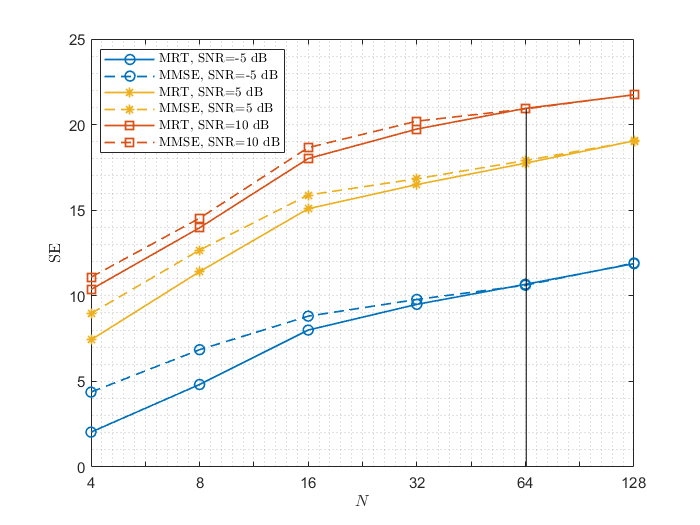}\vspace{-4mm}
    \caption{SE vs. $N$.}\vspace{-4mm}
\label{NR3}
\end{figure}

Figure \ref{NR3} shows the achievable SE with respect to $N$ for different SNRs. It can be seen that the performance convergence of MRT and MMSE occurs at almost the same value of $N$ regardless of the SNR. This is mainly because this value of $N$ is sufficient to orthogonalize the composite channel.\vspace{-2mm}

\section{Conclusion}\vspace{-1mm}
This letter presented a novel approach for maximizing the effective rank of the channel matrix for the RIS-aided MU-MISO system. The effective rank maximization improves the condition number of the channel matrix which in turn maximizes the channel capacity.  
In particular, we developed a gradient-descent approach to configure the RIS phase shift and employ the MRT/MMSE  precoding scheme along with WF-based power allocation to obtain the input covariance matrix. The proposed algorithm obtains a fixed point solution by solving the RIS phase shift and MRT/MMSE-WF-based input covariance matrix in an iterative manner. Further, we showed that MRT/MMSE along with optimally configured RIS perform equivalently and becomes optimal precoding when the number of RIS elements is large. Finally, we verify this observation through extensive numerical analysis for a wide range of SNRs and number of transmit antennas.
\vspace{-3mm}
\bibliographystyle{IEEEtran}
\bibliography{Ref}

\end{document}